\author{Anbang Song}
\affiliation{%
  \institution{Yantai University}
  \country{China}
  \postcode{264005}
}
\author{Ziqiang Yu}
\affiliation{%
  \institution{Yantai University}
  \country{China}
  \postcode{264005}
}
\author{Wei Liu}
\affiliation{%
  \institution{Yantai University}
  \country{China}
  \postcode{264005}
}
\author{Yating Xu}
\affiliation{%
  \institution{Yantai University}
  \country{China}
  \postcode{264005}
}
\author{Mingjin Tao}
\affiliation{%
  \institution{Yantai University}
  \country{China}
  \postcode{264005}
}
\newcommand{\edited}[1]{\textcolor{black}{#1}}
\newenvironment{editedenv}{\color{black}}{}
\begin{document}
\title{BR$k$NN-light: Batch Processing of Reverse k-Nearest Neighbor Queries for Moving Objects on Road Networks}

\begin{abstract}
The Reverse $k$-Nearest Neighbor (R$k$NN) query over moving objects on road networks seeks to find all moving objects that consider the specified query point as one of their $k$ nearest neighbors. In location based services, many users probably submit R$k$NN queries simultaneously. However, existing methods largely overlook how to efficiently process multiple such queries together, missing opportunities to share redundant computations and thus reduce overall processing costs. To address this, this work is the first to explore batch processing of multiple R$k$NN queries, aiming to minimize total computation by sharing duplicate calculations across queries. To tackle this issue, we propose the BR$k$NN-Light algorithm, which uses rapid verification and pruning strategies based on geometric constraints, along with an optimized range search technique, to speed up the process of identifying the R$k$NNs for each query. Furthermore, it proposes a dynamic distance caching mechanism to enable computation reuse when handling multiple queries, thereby significantly reducing unnecessary computations. Experiments on multiple real-world road networks demonstrate the superiority of the BR$k$NN-Light algorithm on the processing of batch queries.
\end{abstract}

\begin{CCSXML}
    <ccs2012>
       <concept>
           <concept_id>10003752.10010070.10010111.10011711</concept_id>
           <concept_desc>Theory of computation~Database query processing and optimization (theory)</concept_desc>
           <concept_significance>500</concept_significance>
           </concept>
       <concept>
           <concept_id>10003752.10010070.10010111.10011710</concept_id>
           <concept_desc>Theory of computation~Data structures and algorithms for data management</concept_desc>
           <concept_significance>500</concept_significance>
           </concept>
     </ccs2012>
\end{CCSXML}
    
\ccsdesc[500]{Theory of computation~Database query processing and optimization (theory)}
\ccsdesc[500]{Theory of computation~Data structures and algorithms for data management}

\keywords{reverse k-nearest neighbor, road network, moving object}

\maketitle

\section{Introduction}

Given a fixed query point and a set of moving objects within a road network, a Reverse $k$-Nearest Neighbor (R$k$NN) query aims to find all moving objects that consider the specified query point as one of the $k$-nearest neighbors. These objects are referred to as R$k$NNs of the query point. R$k$NN queries are critical in location-based services \cite{liu_encyclopedia_2018} and are commonly found in scenarios such as shared mobility \cite{10.1145/3488723}, gaming, facility location, and food delivery. \edited{The problem of optimal facility location, for instance, often relies on kNN-based analysis to select new sites for services \cite{YubaoLIU:152606}.}
For example, a food delivery platform may need to identify which nearby drivers are most likely to accept orders from a specific restaurant. From the driver’s perspective, the restaurant that attracts drivers should be among their $k$ nearest, ensuring the delivery distance remains acceptable for both restaurant and drivers. If we treat the restaurant as the query point and the drivers as moving objects, this task involves querying the R$k$NNs for the query point. Broadly, this problem is a form of spatial task allocation, a core challenge in modern crowd-sourcing systems \cite{7498228, tong2016online}. Since the positions of moving objects constantly change, this work adopts a snapshot-based approach, processing R$k$NN queries based on each object's position at a specific time snapshot, requiring the search algorithm to respond in real time.

\edited{In reality, a location-based service system often needs to handle a massive volume of diverse spatial queries simultaneously \cite{10.14778/3514061.3514064}. Sequentially processing these concurrent queries will lead to substantial redundant computations due to overlapping search areas. Therefore, efficiently processing R$k$NNs in batches has become a critical challenge, especially for queries that exhibit spatial locality or path overlap, a common scenario in urban LBS applications such as shopping districts. Our approach is specifically designed to excel in these situations, effectively optimizing overall execution efficiency for such sets of queries.}

While R$k$NN queries have been extensively researched, most existing approaches focus on optimizing the efficiency of handling individual R$k$NN queries, with little attention given to the efficient batch processing of multiple queries through shared computations. Additionally, even for single-query scenarios, these methods often introduce unnecessary overhead and require further refinement. Specifically, some R$k$NN algorithms \cite{Jin2023,Guohui2010860} designed for general networks face high computational complexity when applied to road networks. To improve this, certain studies construct graph indexes, such as Voronoi diagrams \cite{safar_voronoi-based_2009} or hub label indexes \cite{10.1145/2990192}, or establish local shortcuts between vertices within sub-networks \cite{Hlaing2015,10.1109/HPCC/SmartCity/DSS.2018.00177}, with some recent works even developing sophisticated hierarchical indexes enhanced by shortcuts to handle various spatial queries simultaneously \cite{ZhuoCAO:199610}. However, these methods often fail to fully exploit the unique characteristics of road networks to prune false candidates early. Moreover, in dynamic environments where edge weights (reflecting variable travel times) and moving object positions are frequently updated, maintaining these index structures can incur substantial overhead.

To address these challenges, this paper proposes an algorithm named BR$k$NN-Light, specifically designed for batch processing of R$k$NN queries over moving objects on road networks. Here, the ``Light'' reflects our lightweight approach, which avoids constructing and maintaining complex index structures on the road network. 
BR$k$NN-Light adopts a two-phase ``expansion-verification'' framework.
In the expansion phase, it constructs a shortest path tree starting from each query point in the batch. To avoid traversing the entire network, it further deduces a pruning condition based on the R$k$NN query semantics to filter out invalid expansion branches.
In the verification phase, the algorithm enhances efficiency through the following key points:

1) For each moving object encountered during expansion, it employs a fast verification strategy that leverages an optimized range search based on Euclidean distances between vertices, accelerating the evaluation of potential R$k$NN candidates. In particular, we adopt an R-tree based range search to eliminate the candidate objects that they impossibly become the final R$k$NNs. Unlike traditional methods that typically rely on Minimum Bounding Rectangles (MBRs) for node assessment, we develop R-tree node access and pruning logic based on Minimum Bounding Circles (MBCs). This allows for more precise pruning of the search space.

2) We introduce a cross-phase global distance caching mechanism that maintains a shared distance dictionary $\mathcal{D}$ for all queries in the current batch. This mechanism enables different query tasks to share and reuse computed network distance information, significantly reducing redundant computations and thereby speeding up overall processing, especially when batch query points are spatially clustered or their shortest paths highly overlap.

In summary, the main contributions of this paper are as follows:

\begin{itemize}
    \item \textbf{Batch Processing Framework for R$k$NN Queries:} We are the first to propose and address the problem of batch processing for R$k$NN queries involving moving objects on road networks. We design and implement the BR$k$NN-Light algorithm, a lightweight framework optimized for efficiently handling concurrent R$k$NN queries.

    \item \textbf{Efficient R$k$NN Verification and Pruning Optimization:} For the core verification step in R$k$NN queries, we propose an efficient set of pruning and rapid verification strategies. This approach cleverly adapts and integrates range query techniques in Euclidean space, significantly reducing unnecessary network distance computations and improving the processing efficiency of each query.

    \item \textbf{Computation Sharing Mechanism for Batch Processing:} To further enhance the overall performance of batch processing, we design and implement a dynamic distance caching mechanism. This mechanism effectively captures and reuses distance computation results generated during the processing of different query tasks (especially when query points are spatially close or their paths overlap), thereby substantially reducing redundant computation.

    \item \textbf{Comprehensive Experimental Verification and Performance Evaluation:} We conduct extensive experimental evaluations of the proposed BR$k$NN-Light algorithm on multiple real-world road network datasets. The results show that, compared to existing methods, BR$k$NN-Light demonstrates significant advantages in terms of average response time and total processing time for batch R$k$NN queries, with particularly prominent performance gains on large-scale datasets and in high-concurrency scenarios.
\end{itemize}

The remainder of this paper is organized as follows.
We review related work in Section~\ref{sec:related_works}.
Section~\ref{sec:preliminaries} introduces basic definitions, and Section~\ref{sec:the_problem} formally defines the batch R$k$NN query problem.
Our proposed BR$k$NN-Light algorithm is detailed in Section~\ref{sec:our_rknn}.
Experimental results are presented in Section~\ref{sec:experiments}.
Finally, Section~\ref{sec:conclusion} concludes the paper and outlines future work.

\section{Related Work}
\label{sec:related_works}

This section surveys the research on R$k$NN queries in Euclidean space and on road networks.

\subsection{R$k$NN in Euclidean Space}
Korn and Muthukrishnan first proposed the Reverse Nearest Neighbor (RNN) concept in 2000~\cite{10.1145/335191.335415}, later extending it to R$k$NN. Their RNN-tree used precomputed distance relationships for indexing but required maintaining two indexes in dynamic scenarios. Yang et al.'s RdNN-tree~\cite{Yang2001AI} needed only a single index but still suffered from the latency of precomputation updates.
To overcome this, snapshot-based algorithms were developed. Stanoi et al.'s 60-degree pruning~\cite{Stanoi2000ReverseNN} divided the search space into six symmetrical regions but faced challenges extending to higher dimensions. Tao et al.'s TPL pruning~\cite{10.5555/1316689.1316754} innovatively combined geometric half-space properties with R-tree traversal for efficient high-dimensional data processing. Wu et al.'s FINCH algorithm~\cite{10.14778/1453856.1453970} reduced the 2D search area. Zhang's Nested Approximate Q-neighbor (NAQ) tree~\cite{Zhang2010NAQtree} merged expansion and query steps and introduced pruning rules, enhancing filtering using verification phase information.

Research on R$k$NN queries in Euclidean space also includes real-time aggregation in data streams~\cite{KORN2002814}, continuous monitoring of dynamic objects~\cite{10.1007/978-3-319-19315-1_27}, and techniques like Conic Section Discrimination (CSD)~\cite{Li03102023} to accelerate verification. Furthermore, significant efforts have been dedicated to privacy-preserving R$k$NN queries, addressing challenges in cloud environments, over encrypted data, or for aggregated queries on crowd-sensed data~\cite{10172058, 9910416, 10175584}. Despite significant progress in Euclidean R$k$NN research~\cite{10.14778/2735479.2735492}, these methods fundamentally rely on Euclidean distance. They cannot accurately capture actual connectivity, path characteristics, or dynamic path costs dictated by road network topology, and are therefore difficult to directly apply to road network scenarios.

\subsection{R$k$NN on Road Networks}

R$k$NN query research on road networks has produced a range of techniques, from fundamental algorithms to advanced applications.
Yiu et al.'s Eager algorithm~\cite{10.5555/1128596.1128765}, an early foundational work, first introduced R$k$NN queries to graph settings, using a Dijkstra-like expansion with pruning.
Subsequent work explored precomputation and indexing to enhance efficiency. Safar et al.~\cite{safar_voronoi-based_2009} used Voronoi cells, but storage and query overheads increased significantly with $k$. Hlaing et al.~\cite{Hlaing2015} proposed methods based on Materialized Path Views (SMPV) and Incremental Euclidean Restriction (IER), reducing search space via graph partitioning and Euclidean distance pruning. Efentakis et al.'s ReHub algorithm~\cite{10.1145/2990192} integrated Hub labeling, synergizing offline precomputation with online query processing.

For continuous R$k$NN queries with moving objects and dynamic road networks, proposed solutions include safe region strategies~\cite{ijgi5120247, Cheema2012} to reduce updates, and dual-layer indexes (e.g., DLM-trees~\cite{Guohui2010860}) or trajectory prediction~\cite{10.1109/TKDE.2017.2776268} for efficient continuous query and update handling. To manage time-varying road networks, some studies use grid partitioning and subgraph merging~\cite{10.1109/HPCC/SmartCity/DSS.2018.00177}.

R$k$NN research has also extended to specific graph types and complex applications. Heuristic pruning and algorithms like SWIFT for directed graphs have been proposed~\cite{10.1007/978-3-319-14977-6_10, 10.1109/ICIS.2015.7166606}. Advanced research expanded query objects from individuals to groups~\cite{Allheeib2022} and explored maximizing R$k$NN influence in socio-geographic networks~\cite{Jin2023}.

However, despite progress, existing methods face common limitations in efficiently processing batch R$k$NN queries on dynamic road networks:
First, many methods relying on complex indexes or precomputation incur high maintenance costs as object positions frequently update, hampering real-time performance.
Second, during candidate verification, existing methods (e.g., IER) often fail to fully optimize Euclidean distance-based pruning efficiency or effectively leverage spatial indexes like R-trees to reduce redundant network distance computations.
Finally, and most critically, these methods generally lack cross-query computation sharing mechanisms for batch processing, forcing independent verification and path computation for each query, leading to substantial repetitive work.
This paper addresses these issues, focusing on efficiency bottlenecks in batch processing.

\section{Preliminaries}
\label{sec:preliminaries}

This paper studies query problems based on a road network $G=(V,E,W)$. $V$ is a set of vertices (intersections, facility points, etc.), \edited{where each vertex is associated with two-dimensional coordinates. $E$ is a set of edges and $W$ is a function that assigns a positive weight to each edge}. \edited{The weight of an edge $e$, denoted as $w(e)$, represents the distance of the road segment.} There are two types of entities in the network: a set of fixed facility points $Q_{f} \subseteq V$, and a set of moving objects $\mathcal{M}$ whose locations change dynamically. The position of each moving object $m \in \mathcal{M}$ is described by the edge $e_m=(u,v)$ it is on and its offset $\text{off}_{m}$ from endpoint $u$. Here, if $u$ and $v$ are represented by integers, it is assumed that $u < v$.

The shortest path distance between two points $v_i, v_j \in V$ is denoted as $SD(v_i,v_j)$. A moving object $m$ is located on an edge $e_m=(u,v)$. The shortest path distance from $m$ to a facility point $q \in Q_{f}$ is $SD(m, q) = \min(\text{off}_{m} + SD(u, q), w(e_m)-\text{off}_{m} + SD(v, q))$.

\begin{definition}[$k$-Nearest Neighbor Query]
    Given a moving object $m \in \mathcal{M}$, a set of facilities $Q_{f}$, and an integer $k$, $kNN(m, Q_{f}, k)$ returns the $k$ facilities in $Q_{f}$ that are closest to $m$. The result set $kNN_{res} \subseteq Q_{f}$ satisfies: (1) $|kNN_{res}|=k$; (2) $\forall q' \in kNN_{res}, \forall q'' \in Q_{f} \setminus kNN_{res}, SD(m,q') \le SD(m,q'')$. This is denoted concisely as $kNN(m)$.
\end{definition}
\begin{definition}[Reverse $k$-Nearest Neighbor Query]
    \label{def:rknn_single}
    Given a facility $q_{f} \in Q_{f}$, a set of moving objects $\mathcal{M}$, and an integer $k$, $RkNN(q_{f}, \mathcal{M}, k)$ returns all moving objects in $\mathcal{M}$ that consider $q_{f}$ as one of their $k$-nearest neighbors, i.e., $\{m \in \mathcal{M} \mid q_{f} \in kNN(m, Q_{f}, k)\}$. This is denoted concisely as $RkNN(q_{f})$.
\end{definition}

Table~\ref{tab:notation} summarizes the main symbols used throughout this paper for quick reference.

\begin{table}[htbp]
    \centering
    \caption{Summary of Notation}
    \label{tab:notation}
    \begin{tabularx}{\columnwidth}{@{}lX@{}}
        \toprule
        Symbol & Description \\
        \midrule
        $P$ & Vertex to 2D coordinate mapping, e.g., $P(v)$. \\ 
        $\mathcal{M}$ & Set of moving objects. \\ 
        $m$ & A moving object in $\mathcal{M}$. \\ 
        $e_m=(u,v)$ & Edge of moving object $m$, with $u,v \in V$. \\ 
        $\text{off}_m$ & Offset of $m$ on edge $e_m$ from $u$. \\ 
        $Q_f$ & Set of all facility points (subset of $V$). \\ 
        $q_f$ & A facility vertex in $Q_f$, often a query point. \\ 
        $Q_q$ & Set of query facilities for batch query ($Q_q \subseteq Q_f$). \\ 
        $SD(x,y)$ & Shortest path distance in network $G$. \\ 
        $ED(x, y)$ & Euclidean distance between vertex $x, y$. \\
        $\mathcal{D}$ & Global SSSP result cache. \\
        \bottomrule
    \end{tabularx}
\end{table}

\section{Problem Definition}
\label{sec:the_problem}

Based on the R$k$NN query defined in the previous section, this section formally defines the core problem addressed in this paper: batch Reverse $k$-Nearest Neighbor queries.

\begin{definition}[Batch Reverse $k$-Nearest Neighbor Query]
    \label{def:batch_R$k$NN}
    Given a road network $G=(V,E,W)$, a set of query facilities $Q_{q} \subseteq Q_{f}$ (where $Q_f$ is the set of all facilities), a set of moving objects $\mathcal{M}$, and a positive integer $k$. A batch Reverse $k$-Nearest Neighbor query aims to compute, for each facility $q_{f} \in Q_{q}$, its set of reverse $k$-nearest neighbor moving objects $RkNN(q_{f}, \mathcal{M}, k)$.

    According to the R$k$NN definition (\ref{def:rknn_single}), for a single query facility $q_{f} \in Q_{q}$, its set of reverse $k$-nearest neighbor moving objects is:
    $$RkNN(q_{f}, \mathcal{M}, k) = \{ m \in \mathcal{M} \mid q_{f} \in kNN(m, Q_{f}, k) \}$$
    where $kNN(m, Q_{f}, k)$ are the $k$-nearest neighbors of moving object $m$ from the set of all facilities $Q_{f}$, with distances computed based on $SD(m,q)$.

    The final output of the batch query is a set of $|Q_{q}|$ pairs, each containing a query facility $q_{f} \in Q_{q}$ and its corresponding set of reverse $k$-nearest neighbor moving objects $RkNN(q_{f},\allowbreak \mathcal{M}, k)$. This can be formally represented as:
    $$\text{BRkNN}(Q_{q}, \mathcal{M}, k) = \{ (q_{f}, RkNN(q_{f}, \mathcal{M}, k)) \mid q_{f} \in Q_{q} \} $$
\end{definition}

We use a food delivery scenario as an example to illustrate how batch Reverse $k$-Nearest Neighbor queries can match restaurants with potential delivery personnel. Figure \ref{fig:delivery-map} shows a road network where black line segments represent roads, blue dots represent intersections, red dots represent couriers (moving objects, $o_1$ to $o_4$), and yellow stars represent restaurants (query points, $s_1$ to $s_4$). Their positions are distributed on different road segments.

\begin{figure}[htbp]
    \centering
    \includegraphics[width=0.78\columnwidth]{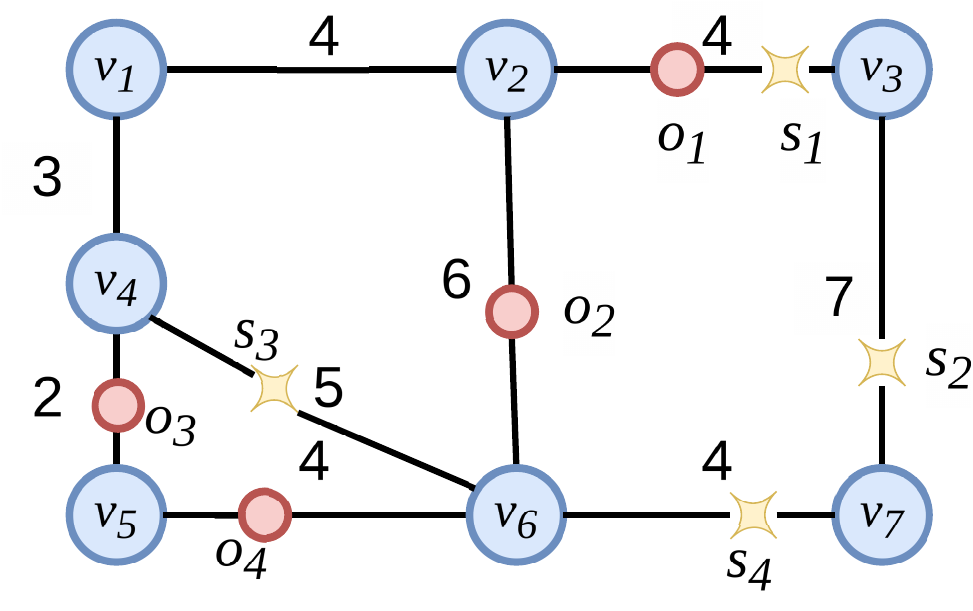}
    \caption{Example of a road network for a food delivery system. Assumptions: Restaurants (yellow stars) are located at vertices of the graph; couriers (red dots) are located on road segments, defined by an edge and an offset.}
    \label{fig:delivery-map}
\end{figure}

Such batch R$k$NN queries are typically initiated by a platform to simultaneously assess, from the perspective of numerous merchants, their attractiveness or service potential to surrounding moving objects (such as couriers). For example, the platform can use this information to optimize the intelligent assignment of orders, ensuring that orders are efficiently undertaken by couriers who consider the corresponding merchant a nearby option, or to identify which merchants might face a shortage of delivery capacity, thereby enabling targeted capacity scheduling or merchant support.

Suppose restaurants $s_1$ and $s_3$ simultaneously wish to obtain information about couriers who can quickly respond to their orders; we assume $k=2$ here.
Restaurant (vertex) $s_1$ needs to find those couriers whose respective lists of $k$ (=2) preferred restaurants include $s_1$. Visually, if courier $o_1$'s two nearest restaurants are $s_1$ and $s_2$, and courier $o_2$'s two nearest restaurants are $s_1$ and $s_4$, then both $o_1$ and $o_2$ are reverse 2-nearest neighbors of $s_1$. Similarly, for facility $s_3$, an R2NN query can find couriers who consider it a preferred restaurant (e.g., assume $o_3$ and $o_4$ in the figure satisfy this condition).

\section{Methodology}
\label{sec:our_rknn}

This section details our BR$k$NN-Light algorithm for batch R$k$NN queries on road networks. We first provide an overview of its ``expansion-verification'' framework (Section~\ref{sec:overview}). Subsequent sections elaborate on the expansion phase with pruning (Section~\ref{sec:expansion}), the R-tree based fast verification mechanism (Section~\ref{sec:knn_verification}), and the cross-query computation sharing via distance caching (Section~\ref{sec:shared_computation}).

\subsection{Overview}
\label{sec:overview}

A naive approach to R$k$NN queries on road networks, verifying each moving object $m \in \mathcal{M}$ via a $kNN(m)$ query, is inefficient. It suffers from redundant network traversals and costly repeated shortest path computations, especially for large datasets and real-time demands.

To address these challenges, BR$k$NN-Light employs an ``Expansion-Verification'' framework.
The expansion phase explores the network from query points, using effective pruning (Section~\ref{sec:expansion}) to drastically reduce the search space.
The verification phase then accurately assesses R$k$NN eligibility for candidates using fast $k$NN membership tests (Section~\ref{sec:knn_verification}).
To systematically enhance efficiency and minimize redundancy across the batch, the framework leverages a shared Single-Source Shortest Path (SSSP) result cache $\mathcal{D}$ (Section~\ref{sec:shared_computation}). These components work synergistically for efficient and accurate batch R$k$NN processing.

\subsection{Expansion}
\label{sec:expansion}

The expansion phase identifies potential R$k$NN candidates by exploring the network from each query facility $q_f$. To avoid exhaustive verification, an effective pruning strategy is crucial.

\subsubsection{Basic Idea of Pruning Strategy}

Many vertices, due to their topological properties, cannot be R$k$NNs of a query point $q$.
Inspired by Yiu et al.'s pruning property for RNN queries~\cite{10.5555/1128596.1128765}, where descendants of an RNN vertex $v$ \edited{(i.e., vertices whose shortest path from $q$ passes through $v$)} are not RNNs of $q$, we extend this for general R$k$NN queries. Our strategy is based on the following lemma:

\begin{lemma}
    \label{thm:rknn_pruning}
    Let $q$ be a query point and $v$ be a vertex. If $v$ is an R$k$NN of $q$, and the shortest path from another vertex $v' \neq v$ to $q$ passes through $v$, then for $v'$ to be an R$k$NN of $q$, $v$ must be one of $v'$'s $k$-nearest neighbors.
\end{lemma}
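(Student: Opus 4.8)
The plan is to convert the hypotheses into distance inequalities and chain them. The one hypothesis doing real work is the path condition: since $v$ lies on a shortest path from $v'$ to $q$, that path's two sub-paths are themselves shortest, so $SD(v',q) = SD(v',v) + SD(v,q)$. Edge weights are positive, hence $SD(v,q) \ge 0$, which gives $SD(v',v) \le SD(v',q)$ — informally, $v$ is at least as close to $v'$ as $q$ is, and strictly closer unless $v = q$. I would dispose of the case $v = q$ first: then the conclusion "$v \in kNN(v', Q_f, k)$" is literally "$q \in kNN(v', Q_f, k)$", i.e., exactly the premise that $v'$ is an R$k$NN of $q$, so the implication holds trivially.

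For $v \ne q$ I would argue at the level of $v'$'s neighbor ranking, using the tie-tolerant reading of the $kNN$ definition. Consider the facilities strictly closer to $v'$ than $v$ is: any such $p$ satisfies $SD(v',p) < SD(v',v) \le SD(v',q)$, so $p$ is also strictly closer to $v'$ than $q$. Now $v'$ being an R$k$NN of $q$ means $q \in kNN(v', Q_f, k)$; by property (2) of that definition every facility strictly closer to $v'$ than $q$ must lie in this $k$-element set, while $q$ itself lies in it but is not strictly closer to $v'$ than itself, so at most $k-1$ facilities are strictly closer to $v'$ than $q$. Hence at most $k-1$ facilities are strictly closer to $v'$ than $v$, which is precisely the assertion that $v$ is one of $v'$'s $k$ nearest neighbors.

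There is no deep obstacle; the work is entirely bookkeeping, and two points deserve explicit care. First, "$v$ is one of $v'$'s $k$-nearest neighbors" has a literal meaning only when $v$ is itself a facility, since neighbors are drawn from $Q_f$; I would therefore either state the lemma for a facility vertex $v \in Q_f$, or recast the conclusion as "$SD(v',v)$ does not exceed the distance from $v'$ to its $k$-th nearest facility", which is interpretation-free and is exactly what the expansion-phase pruning uses. Second, as the proof shows, only the path condition and "$v'$ is an R$k$NN of $q$" are actually invoked — the hypothesis "$v$ is an R$k$NN of $q$" is not needed for the stated conclusion and is, I expect, retained only because it is the condition under which the algorithm triggers this pruning while building the shortest-path tree; I would flag this so a reader does not go looking for where that assumption enters.
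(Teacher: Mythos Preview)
Your proof is correct and follows essentially the same route as the paper's: both derive $SD(v',v) < SD(v',q)$ from the shortest-path additivity and then argue (you by direct counting, the paper by contradiction) that any $k$ facilities closer to $v'$ than $v$ would also be closer than $q$, forcing $q \notin kNN(v')$. Your two side remarks are on target as well --- the paper's own proof never invokes the hypothesis that $v$ is an R$k$NN of $q$, and it silently treats $v$ as a facility.
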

\begin{proof}
    Assume $v$ is a reverse $k$-nearest neighbor of $q$, which means $q$ is among the $k$-nearest neighbors of $v$. Since the shortest path from $v'$ to $q$ passes through $v$, we have $SD(v', q) = SD(v', v) + SD(v, q)$. As $SD(v, q) > 0$ (when $v \neq q$), it follows that $SD(v', v) < SD(v', q)$. If $v$ is not among the $k$-nearest neighbors of $v'$, then there exist at least $k$ vertices $u_1, \dots, u_k$ such that $SD(v', u_i) < SD(v', v)$ holds for all $i$. Combined with $SD(v', v) < SD(v', q)$, we get $SD(v', u_i) < SD(v', q)$. This implies that these $k$ vertices are closer to $v'$ than $q$, thus $q$ cannot be one of $v'$'s $k$-nearest neighbors, contradicting the assumption that $v'$ is a reverse $k$-nearest neighbor of $q$. Therefore, $v$ must be among the $k$-nearest neighbors of $v'$.
\end{proof}

This lemma enables powerful pruning: during a Dijkstra-like expansion from $q$ building a shortest path tree, if we visit a vertex $v$ that is an R$k$NN of $q$, then for any descendant $v'$ (where $q \leadsto v \leadsto v'$ is a shortest path), if $v$ is not among $v'$'s $k$-nearest neighbors, $v'$ and its descendants can be safely pruned. This significantly reduces vertices requiring verification.

\subsubsection{Priority Queue-based Expansion}
Algorithm~\ref{alg:expansion} (EPP-MO) details this R$k$NN expansion for a single $q_{f}$. It uses a min-priority queue $\mathcal{Q}$ for Dijkstra-like traversal to identify candidate moving objects, incorporating pruning based on Lemma~\ref{thm:rknn_pruning}. The global SSSP cache $\mathcal{D}$ (Section~\ref{sec:shared_computation}) accelerates distance computations and pruning decisions.

\begin{algorithm}[htbp]
    \caption{Expansion Phase for Finding R$k$NN Moving Objects (EPP-MO)}
    \label{alg:expansion}
    \SetAlgoLined
    \KwIn{Query facility $q_{f} \in Q_{q}$, Set of all moving objects $\mathcal{M}$, integer $k$}
    \KwOut{Set of R$k$NN moving objects $R_{mov} \subseteq \mathcal{M}$}
    \SetKwFunction{FExpand}{expand}
    \SetKwProg{Fn}{Function}{}{}
    \Fn{\FExpand{$q_{f}$, $\mathcal{M}$, $k$}}{
        $R_{mov} \gets \emptyset$\;
        $\mathcal{P} \gets \emptyset$\;
        $\mathcal{M}_{processed} \gets \emptyset$\;
        $\mathcal{Q} \gets \text{min-heap with } (0, q_{f})$\; \label{line:init_vars_expansion}
        \While{$\mathcal{Q} \neq \emptyset$}{
            $(d_u, u) \gets \mathcal{Q}.\text{pop}()$\;
            \If{$u \in \mathcal{P}$}{
                \textbf{continue}\;
            }
            $\mathcal{P}.\text{insert}(u)$\;

            \ForEach{edge $e=(u,v')$ incident to $u$}{
                \ForEach{moving object $m \in \mathcal{M}$ such that $m.edge = e$ and $m \notin \mathcal{M}_{processed}$}{
                    \If{VRQ$(q_{f}, k, m)$}{ 
                        $R_{mov}.\text{add}(m)$\;
                    }
                    $\mathcal{M}_{processed}.\text{add}(m)$\;
                }
            }

            bool $\text{canPrunePath} = \textbf{false}$\;

            \If{NOT $\text{canPrunePath}$}{
                Push unvisited adjacent vertices of $u$ to $\mathcal{Q}$ with distances $d_u + w_{uv'}$\; \label{line:push_neighbors_expansion} 
            }
        }
        \Return{$R_{mov}$}\;
    }
\end{algorithm}

EPP-MO expands from $q_{f}$, extracting the closest unprocessed vertex $u$ from $\mathcal{Q}$. Processed vertices (in $\mathcal{P}$) are skipped. Moving objects $m$ on edges incident to $u$ are verified via VRQ($q_{f}, k, m$) (Section~\ref{sec:knn_verification}); R$k$NNs are added to $R_{mov}$. The algorithm then decides whether to expand to $u$'s neighbors based on pruning conditions (Lemma~\ref{thm:rknn_pruning}, aided by cache $\mathcal{D}$). Unpruned, unprocessed neighbors $v'$ are added to $\mathcal{Q}$. This process of ordered expansion, leveraging shared cache $\mathcal{D}$ and pruning, significantly reduces unnecessary traversals until $\mathcal{Q}$ is empty.

\subsubsection{Pruning Example Analysis}

Figure~\ref{fig:prune-example} (left) shows a simple road network with 6 vertices and 5 edges. Here, we consider the query point $q$, points of interest, and intersections all as vertices. To intuitively demonstrate the query process, we explain it using a reverse nearest neighbor query with $q_{f}$ as the query point. We maintain a priority queue $\mathcal{Q}$, initialized with \texttt{<0, $q_{f}$>}.

\begin{figure}[htbp]
    \centering
    \begin{minipage}[t]{0.505\columnwidth}
        \centering
        \subfloat{
            \includegraphics[width=\textwidth]{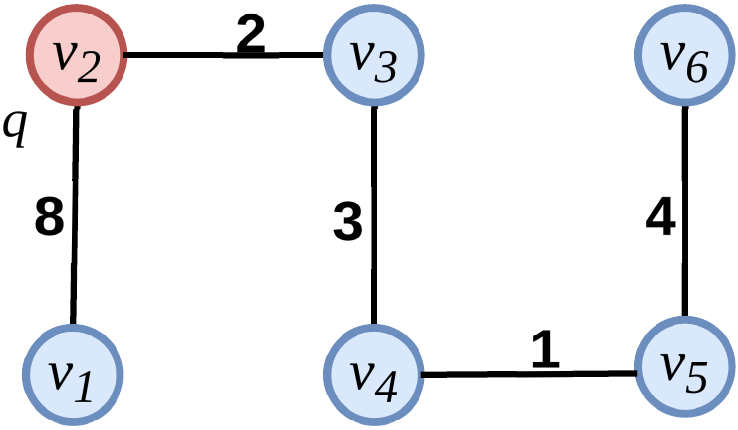}
        }
    \end{minipage}
    \hfill
    \begin{minipage}[t]{0.45\columnwidth}
        \centering
        \subfloat{
            \includegraphics[width=\textwidth]{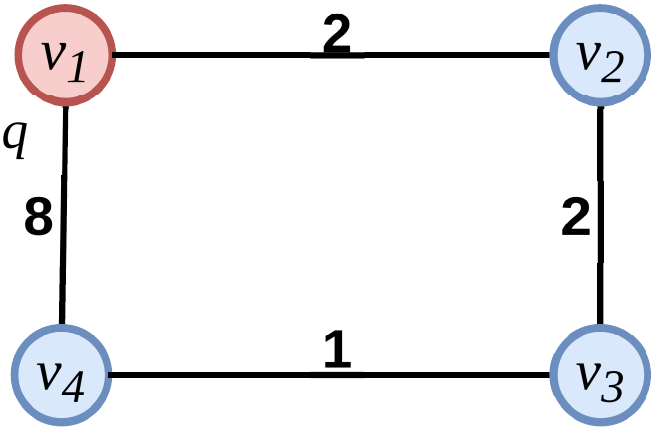}
        }
    \end{minipage}
    \caption{Example of applying pruning strategy during expansion process.}
    \label{fig:prune-example}
\end{figure}

\edited{First, we pop vertex $q_f$ from $\mathcal{Q}$, mark it as processed, and add its adjacent points $v_1$ and $v_3$ to $\mathcal{Q}$. Then, we pop the point of interest $v_3$ and perform a verification process to determine if its nearest neighbor set includes query point $q_f$. Clearly, $q_f$ is one of $v_3$'s nearest neighbors, so we add $v_3$ to the result set and continue expanding through $v_3$. Subsequently, vertex $v_4$ is processed similarly. However, since $v_4$ is not one of $q_f$'s reverse nearest neighbors, we pop it from the queue and do not expand further from it. Finally, we visit the last vertex $v_1$ in $\mathcal{Q}$. Since $VRQ(v_1)$ is $\textbf{true}$, we add it to $\mathcal{R}$, and the query process ends. In this example, we only verified $v_1$, $v_3$, and $v_4$. We pruned the children of $v_4$ from the shortest path tree, so the expansion phase does not visit nodes $v_5$ and $v_6$, avoiding invalid accesses.}

When a vertex is excluded from the shortest path tree starting at $q$, it means it will not be processed again, even if it might appear in other branches of this shortest path tree. Figure~\ref{fig:prune-example} (right) shows a simplified example of this situation. In one step, because $v_3$'s nearest neighbors do not include query point $q$, we mark $v_3$'s adjacent point $v_4$ as processed (or effectively pruned from $v_3$'s branch), thereby excluding $v_4$ from further expansion along this path. In the next iteration, we might pop vertex $v_4$ from $\mathcal{Q}$ if it is directly adjacent to $q$ via another path. But if $v_4$ had already been marked as processed (or pruned as a descendant of a pruned node), it would be skipped. The query process ends. Keep in mind that the purpose of our pruning is to exclude vertices for which the shortest path to query point $q$ passes through an intermediate vertex $v$, and $v$ is not a reverse $k$-neighbor of $q$ (or $v$ is an R$k$NN of $q$ but is not a $k$NN of its descendant). This pruning strategy can significantly reduce unnecessary verification operations and improve algorithm efficiency.

\edited{The expansion process naturally terminates when the priority queue becomes empty. In sparse scenarios where no moving objects are found nearby, the expansion may cover a larger portion of the network. However, our pruning strategies (Lemma~\ref{thm:rknn_pruning}) remain effective in cutting off unpromising paths. For practical applications, an additional termination condition, such as a maximum search radius or a hop count limit, could be easily integrated to prevent excessive exploration in vast, empty regions.}

\subsection{Verification}
\label{sec:knn_verification}

After expansion, the verification phase accurately determines if each candidate moving object $m \in \mathcal{M}$ is an R$k$NN of a query facility $q_{f} \in Q_{q}$, i.e., if $q_{f} \in kNN(m, Q_{f}, k)$. We propose a fast Euclidean distance-based \edited{verification}, accelerated by the SSSP cache $\mathcal{D}$ (Section~\ref{sec:shared_computation}) for any required network $k$NN computations.

\subsubsection{Fast $k$NN Membership Test}
\label{sec:quick_knn_membership_check}

Instead of computing full $k$NN lists, we only need to check if $q_f$ is among $m$'s $k$-nearest neighbors. The following lemma facilitates this using a Euclidean range query with radius $SD(m, q_f)$, counting facilities within this range.

\begin{lemma}[$k$NN Membership \edited{verification}]
    \label{thm:knn_membership}
    Given $m$ (coordinates $P(m)$), $q_{f}$, and $k$. Let $d_{r} = SD(m, q_{f})$ (its computation benefits from $\mathcal{D}$). If the count of facilities $q' \in Q_{f}$ with $ED(P(m), P(q')) \leq d_{r}$ (i.e., within the Euclidean circle centered at $P(m)$ with radius $d_r$) is $\le k$, then $q_{f} \in kNN(m, Q_{f}, k)$.
\end{lemma}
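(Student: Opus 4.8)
The plan is to prove the lemma by a counting comparison between the Euclidean disk of radius $d_r$ about $P(m)$ and the network neighbourhood of $m$, resting on the single structural fact that Euclidean distance never overestimates road-network distance, i.e. $ED(P(m), P(q')) \le SD(m, q')$ for every facility $q' \in Q_{f}$. First I would record this inequality: it holds because each road segment has weight at least the straight-line distance of its endpoints, so for any network path the triangle inequality summed edge by edge gives total length $\ge$ the straight-line distance of its two ends; and it extends to the off-vertex position $P(m)$ through $SD(m, q) = \min(\text{off}_m + SD(u,q),\, w(e_m) - \text{off}_m + SD(v,q))$ together with $\text{off}_m \ge ED(P(m), P(u))$ and $w(e_m) - \text{off}_m \ge ED(P(m), P(v))$. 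In particular, taking $q' = q_{f}$ yields $ED(P(m), P(q_{f})) \le SD(m, q_{f}) = d_{r}$, so $q_{f}$ is itself one of the facilities counted in the hypothesis.

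Next I would let $r$ be the number of facilities $q' \in Q_{f}$ with $SD(m, q') < SD(m, q_{f}) = d_{r}$. Each such $q'$ is distinct from $q_{f}$ and, by the lower-bound inequality, satisfies $ED(P(m), P(q')) \le SD(m, q') < d_{r}$, so it lies inside the Euclidean disk and is therefore among the facilities counted in the hypothesis. Adjoining $q_{f}$ itself, we get $r + 1 \le \#\{q' \in Q_{f} : ED(P(m), P(q')) \le d_{r}\} \le k$, hence $r \le k - 1$. Thus at most $k-1$ facilities are strictly closer to $m$ in network distance than $q_{f}$, which means $q_{f}$ occupies a position no worse than the $k$-th in the ordering of $Q_{f}$ by $SD(m, \cdot)$; breaking ties in favour of $q_{f}$, there is a valid size-$k$ result set $kNN_{res}$ containing $q_{f}$, i.e. $q_{f} \in kNN(m, Q_{f}, k)$, as claimed.

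The main obstacle is really the justification of $ED(P(m), P(q')) \le SD(m, q')$: this is \emph{not} automatic from the bare model $G=(V,E,W)$ with arbitrary positive weights, and must be underpinned by the (standard, but so far implicit) assumption that edge weights dominate the straight-line distances between endpoints, with care taken that this transfers to the moving object's offset position. Once that is in place the rest is elementary bookkeeping. The tie-handling in the final step can alternatively be sidestepped by observing that any facility \emph{outside} the Euclidean disk has $ED(P(m), P(q')) > d_{r}$ and hence $SD(m, q') \ge ED(P(m), P(q')) > d_{r} = SD(m, q_{f})$, so it is strictly farther from $m$ than $q_{f}$ and can never displace $q_{f}$ from the top $k$; since the facilities \emph{inside} the disk number at most $k$ and include $q_{f}$, the claim follows immediately. (Throughout I assume $|Q_{f}| \ge k$ so that $kNN(m, Q_{f}, k)$ is well defined.)
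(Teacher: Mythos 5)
Your proof is correct and rests on the same two ingredients as the paper's: the lower bound $ED(P(m),P(q')) \le SD(m,q')$ and a count of how many facilities can beat $q_f$ in network distance versus how many lie in the Euclidean disk (the paper phrases this as a proof by contradiction, yours as a direct count, but these are the same argument). Your version is in fact tighter than the paper's — you correctly flag that $ED \le SD$ is not automatic for arbitrary positive edge weights and must be grounded in the assumption that each weight dominates the straight-line distance of its endpoints (including at the moving object's offset position), and your tie-breaking discussion cleanly handles the boundary case that the paper's final sentence glosses over.
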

\begin{proof}
    By contradiction. Assume the number of facilities within the Euclidean circle is $\leq k$, but $q_{f} \notin kNN(m)$. Then, there exist at least $k$ network neighbors $q^*_i$ such that $SD(m, q^*_i) \leq d_{r}$. Since $ED \leq SD$, all these $q^*_i$ are within the Euclidean circle, leading to at least $k$ facilities in the circle. If the number of facilities in the circle is $<k$, or if it is $=k$ and $q_{f}$ (which is also in the circle because $ED(m,q_{f})\leq d_{r}$) must be one of the $k$NNs, a contradiction arises. Thus, the original proposition holds.
\end{proof}

Lemma~\ref{thm:knn_membership} provides an efficient sufficient condition using the network distance $d_r$ (from expansion) for a Euclidean range query. Meeting this condition avoids a full network $k$NN search, improving efficiency. Our R-tree based range counting method is detailed in Section~\ref{sec:our_new_query_method}.

Algorithm~\ref{alg:optimized_validate} (VRQ) implements this quick verification. Given the network distance $d_r = SD(m, q_f)$ (obtained during expansion), VRQ first attempts confirmation using Lemma~\ref{thm:knn_membership} via an R-tree Euclidean range query (lines 1-5, details in Section~\ref{sec:our_new_query_method}). If this quick check is inconclusive, it then performs a standard network $kNN(m, Q_f, k)$ query, potentially using cache $\mathcal{D}$ (Section~\ref{sec:shared_computation}), to determine the result (lines 7-8).

\begin{algorithm}[htbp]
    \caption{verification with Range Query (VRQ)}
    \label{alg:optimized_validate}
    \KwIn{Set of all facilities $Q_{f}$, moving object $m$, query facility $q_{f}$, network distance $d_{r}(m, q_{f})$ from $m$ to $q_{f}$, integer $k$}
    \KwOut{Boolean, indicating if $q_{f} \in kNN(m, Q_{f}, k)$}
    
    $d_r \gets d_{r}(m, q_{f})$\;
    $P_m \gets P(m)$\; 
    
    $\mathcal{Q}_{e} \gets \text{RangeQuery}(T_{Rtree}, P_m, d_r)$\; 
    
    \If{$|\mathcal{Q}_{e}| \leq k$}{ 
        \Return \textbf{true}\;
    }
    
    $kNN_{list} \gets kNN(m, Q_{f}, k)$\; 
    \Return ($q_{f} \in kNN_{list}$)\;
\end{algorithm}

\subsubsection{Optimized R-tree Range Counting}
\label{sec:our_new_query_method}

To support the quick $k$NN membership test (Section~\ref{sec:quick_knn_membership_check}), we need to efficiently count facilities within a Euclidean circle. This circle is centered at $m$ (coordinates $P(m)$) and has a radius $d_r=SD(m, q_{f})$. We use an R-tree indexing all facilities $Q_{f}$ by their 2D coordinates. Crucially, each R-tree node pre-stores the total count of facilities within its MBR. For a static facility set, this count is established at build time and requires no subsequent updates, thus incurring no additional maintenance overhead.

To implement this efficient range counting, Algorithm~\ref{alg:rtree_range_count} (RCF) traverses the R-tree (see Figure~\ref{fig:query-process}). It employs precise geometric pruning based on $\text{minDist}$ and $\text{maxDist}$ from the query center to node MBRs, either pruning nodes, directly using pre-stored counts for fully contained nodes, or recursing for partially overlapping ones, significantly boosting efficiency.

\begin{figure}[htbp]
    \centering
    \includegraphics[width=0.78\columnwidth]{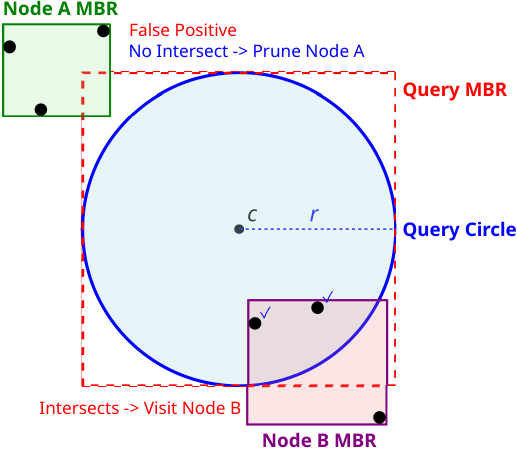}
    \caption{Illustration of an optimized circular range query in an R-tree.}
    \label{fig:query-process}
\end{figure}

\begin{algorithm}[htbp]
    \SetAlgoLined
    \KwIn{Center $P_c$, radius $r$, R-tree $T$}
    \KwOut{Count $N$ of facilities within circle $(P_c, r)$}
    $N \leftarrow 0$\;
    $S \leftarrow [T.\text{root}()]$ \; 
    \While{$S$ is not empty}{
        $curr \leftarrow S.\text{pop}()$\;
        \eIf{$curr$ is a leaf node}{
            \ForEach{facility $q_{obj}$ in $curr$}{
                \If{$ED(P_c, P(q_{obj})) \leq r$}{ 
                    $N \leftarrow N + 1$\;
                }
            }
        }
        { 
            \ForEach{$child$ in $curr.\text{children}()$}{
                \If{$\text{minDist}(P_c, child.\text{MBR}) \leq r$}{ 
                    \If{$\text{maxDist}(P_c, child.\text{MBR}) \leq r$}{ 
                        $N \leftarrow N + child.\text{count}$\; 
                    }
                    \Else{ 
                        $S.\text{push}(child)$\;
                    }
                }
            }
        }
    }
    \Return{$N$}\;
    \caption{Range Count for Facilities (RCF)}
    \label{alg:rtree_range_count}
\end{algorithm}

\subsection{Shared Computation via SSSP Caching}
\label{sec:shared_computation}

Batch R$k$NN processing can still incur redundant Single-Source Shortest Path (SSSP) computations when the same moving object $m$ is verified against multiple query facilities $q_f$. To mitigate this, we introduce a lightweight dynamic SSSP cache $\mathcal{D}$. This map, initialized empty per batch, stores for each SSSP-source moving object $m$ (key) its ``distance map'' $\mathcal{D}_m$ (value), which in turn maps graph vertices $v$ to their SSSP-finalized distances $SD(m,v)$.

When $kNN(m, Q_{f}, k)$ is needed, a cache hit for $\mathcal{D}_m$ directly provides distances. On a miss, a standard SSSP computation from $m$ is performed, and the resulting $\mathcal{D}_m$ is cached for future reuse by any $q_f$. Assuming a static network snapshot per batch, cached distances remain valid. This dynamically built, on-demand cache significantly improves performance, especially when many objects are verified.

\begin{lemma}{Optimality of Cached Distances}
    \label{thm:cache-optimality}
    For any SSSP source $m \in \mathcal{M}$, once its SSSP computation completes, every entry $\mathcal{D}_m[v]$ in the cached distance map $\mathcal{D}_m$ represents the true network shortest path distance $SD(m,v)$.
\end{lemma}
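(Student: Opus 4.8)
The claim is essentially the correctness guarantee of Dijkstra's algorithm adapted to a source that is a point on an edge rather than a graph vertex, so the plan is to reduce to the classical invariant and argue by induction on the order in which vertices are finalized. First I would make the source precise: since $m$ lies on $e_m=(u,v)$ with offset $\text{off}_m$, I treat $m$ as a virtual source and seed the priority queue with $(\text{off}_m,u)$ and $(w(e_m)-\text{off}_m,v)$ (equivalently, insert a virtual vertex for $m$ joined to $u$ and $v$ by edges of those two weights); then $SD(m,v')=\min_{x\in\{u,v\}}(\text{dist}(m,x)+SD(x,v'))$ matches exactly the definition of $SD(m,q)$ given in Section~\ref{sec:preliminaries}. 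The SSSP routine then runs an ordinary relaxation-based expansion on $G$, and $\mathcal{D}_m[v']$ is recorded as the key $v'$ is popped (finalized) from the heap.

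The core of the argument is the loop invariant: \emph{at the moment a vertex $v'$ is popped from the priority queue with key $d$, we have $d=SD(m,v')$}. I would prove this by strong induction on the finalization order. Base case: the first vertex popped is $u$ or $v$ (whichever of $\text{off}_m$, $w(e_m)-\text{off}_m$ is smaller), and since all edge weights are strictly positive (property of $W$), no path from $m$ can reach it more cheaply than that direct offset, so the stored value is optimal. Inductive step: suppose the invariant held for all earlier pops, and $v'$ is popped now with key $d$. Consider any shortest path $\pi$ from $m$ to $v'$; let $y$ be the first vertex on $\pi$ that has not yet been finalized before this pop (such a $y$ exists, and $y=v'$ is possible). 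The predecessor of $y$ on $\pi$, call it $x$, was finalized earlier, so by induction $\mathcal{D}_m[x]=SD(m,x)$ and the edge $(x,y)$ was relaxed, giving the heap a key for $y$ no larger than $SD(m,x)+w(x,y)=SD(m,y)\le \mathrm{len}(\pi)=SD(m,v')$. Because $v'$ was the minimum-key element popped, $d\le \text{(key of }y)\le SD(m,v')$; and $d\ge SD(m,v')$ always holds since $d$ is the length of some actual $m$-to-$v'$ path (only valid relaxations ever set it). Hence $d=SD(m,v')$.

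From the invariant, every vertex $v'$ that is ever finalized has $\mathcal{D}_m[v']=SD(m,v')$; and conversely every vertex reachable from $m$ is eventually finalized (standard: each can be reached through an already-finalized predecessor, so it enters the heap and, being non-empty of reachable vertices, is eventually extracted), so the cache is complete on the reachable set. Vertices unreachable from $m$ never receive a key and simply have no entry, consistent with $SD(m,v')=\infty$. Finally I would note that the lemma's hypothesis ``a static network snapshot per batch'' is what makes these finalized values remain correct for later reuse by any $q_f$, since neither $W$, $E$, nor the object position underlying the source changes within the batch.

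The main obstacle is not the inductive skeleton, which is textbook, but handling the edge-located source cleanly: I must justify that seeding the heap with the two endpoint offsets is equivalent to a genuine single source at $m$, so that the quantity computed coincides with the $SD(m,\cdot)$ of Section~\ref{sec:preliminaries} (in particular that the shortest path from $m$ never needs to ``re-enter'' edge $e_m$ from the far side in a way not captured by the two seeds). A short lemma that any $m$-to-$v'$ walk decomposes as (offset segment to $u$ or $v$) followed by a vertex-to-vertex path, together with positivity of weights ruling out beneficial detours back through $m$'s edge, settles this; everything else is the standard Dijkstra correctness proof.
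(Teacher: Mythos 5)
Your proof is correct and follows essentially the same route as the paper's: both reduce the claim to the standard correctness invariant of Dijkstra's algorithm and handle the edge-located source by decomposing $SD(m,\cdot)$ through the two endpoints $u$ and $v$ of $e_m$. The only difference is one of detail — you actually prove the finalization invariant by induction and justify the virtual-source reduction, whereas the paper simply cites Dijkstra's correctness and the optimal-substructure property.
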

\begin{proof}
    The correctness of this lemma directly stems from the correctness of the SSSP algorithm employed. Standard Dijkstra's algorithm guarantees that when a vertex $v$ is extracted from the priority queue, its computed distance $dist[v]$ is the shortest path length from the source to $v$. Our caching mechanism stores these determined shortest distances after the SSSP algorithm has completed or reached its termination condition. Therefore, the distance values in the cache inherit the optimality of the SSSP algorithm. When a moving object $m$ (acting as an SSSP source) is located on an edge, its distance calculation $SD(m,v)$ to a graph vertex $v$ is achieved by decomposing it into the distance to the edge's endpoints plus the shortest path distance from an endpoint to $v$. This decomposition, also based on the optimal substructure property of shortest paths, does not affect the optimality of the final distance.
\end{proof}

\edited{
    In practice, this cache $\mathcal{D}$ can be efficiently implemented using a key-value data structure, such as a hash map, which was cleared before processing each new batch.
}

\subsection{Time Complexity Analysis}
\label{sec:time_complexity}

The time complexity of BR$k$NN-Light is influenced by its expansion, verification, and SSSP caching mechanisms. Let $|V|$, $|E|$, $|\mathcal{M}|$, $|Q_f|$, and $|Q_q|$ denote the number of vertices, edges, moving objects, total facilities, and batch queries, respectively.

The expansion phase (Algorithm~\ref{alg:expansion}) executes $|Q_q|$ Dijkstra-like searches. Pruning (Theorem~\ref{thm:rknn_pruning}) aims to limit each search to explore $V_{exp_i} \ll |V|$ vertices and $E_{exp_i} \ll |E|$ edges, resulting in a cost of roughly $\sum_{i=1}^{|Q_q|} O((|V_{exp_i}| + |E_{exp_i}|)\log|V_{exp_i}|)$.

The verification phase (Algorithm~\ref{alg:optimized_validate}) is invoked for candidate moving objects. Each verification involves an R-tree range count (Algorithm~\ref{alg:rtree_range_count}), costing $T_{RCF} = O(\log |Q_f|)$ on average. If this quick check (Lemma~\ref{thm:knn_membership}) is inconclusive, a full network $k$NN query is performed.

Crucially, the SSSP cache (Section~\ref{sec:shared_computation}) ensures that an SSSP computation from any moving object $m$, costing $T_{SSSP}$ (e.g., $O((|V'|+|E'|)\log|V'|)$ on a relevant subgraph), occurs at most once per batch. Let $N_{unique\_m} \le |\mathcal{M}|$ be the count of unique objects requiring such an initial SSSP. Subsequent $k$NN verifications for $m$ leverage cached distances, costing significantly less (e.g., $T_{k\_extract\_cache}$).

Thus, the overall time complexity is approximately the sum of total expansion costs, $N_{unique\_m} \times T_{SSSP}$ for initial SSSP computations, and $N_{cand} \times (T_{RCF} + T_{verify\_decision})$ for $N_{cand}$ verification instances. The primary benefit arises from $N_{unique\_m}$ being substantially smaller than the number of SSSPs required by non-sharing approaches. This amortization of SSSP costs, combined with effective pruning and fast R-tree checks, leads to the practical efficiency demonstrated experimentally, rather than a tight worst-case bound that fully captures all heuristic gains.

\section{Experiments}
\label{sec:experiments}

This section presents a comprehensive experimental evaluation of BR$k$NN-Light's performance in processing batch R$k$NN queries on road networks. We investigate its behavior under varying parameters, assess the contribution of its core components through ablation studies, and compare it against existing baselines on multiple real-world datasets. The evaluation focuses on key metrics like query processing time to demonstrate the algorithm's effectiveness and practical utility.

\subsection{Experimental Setup} 

\textbf{Environment and Datasets} 
\edited{All experiments were conducted on a workstation equipped with AMD 7950x 5.8GHz, 32GB RAM and Arch Linux.} Algorithms were in C++ (Clang 19.1.7, O3). Each test ran 10 times; averages are reported.
We used four real-world road networks (Table~\ref{tab:datasets}) from~\cite{9thDIMACS}, with intersections as vertices and road segments as edges (weights represent distances).\edited{Default parameters were used on the NY dataset.}

\begin{table}[htbp]
  \centering
  \caption{Road network datasets.} 
  \label{tab:datasets}
  \begin{tabular}{@{}llrr@{}}
    \toprule
    Dataset & Region & $|V|$ & $|E|$ \\
    \midrule
    NY & New York City & 264,346 & 733,846 \\
    COL & Colorado & 435,666 & 1,057,066 \\
    FLA & Florida & 1,070,376 & 2,712,798 \\
    CTR & Central USA & 14,081,816 & 34,292,496 \\
    \bottomrule
  \end{tabular}
\end{table}

\textbf{Query Load}
Query loads were generated systematically. We studied the impact of batch size $|Q_q|$ and parameter $k$. Default values were used unless specified: batch size 100, $10^5$ moving objects ($\mathcal{M}$), and $k=10$. Query facilities ($Q_q$) were randomly sampled from vertices (uniform spatial distribution). Moving objects ($\mathcal{M}$) were randomly placed on edges. All vertices were potential facilities ($Q_f = V$).

\subsection{Evaluating BR$k$NN-Light}

\subsubsection{Impact of Batch Size}
We evaluate performance with varying batch sizes. Figure~\ref{fig:impact_batch_size_total_time} shows total query time (ms, log scale) of our method on four datasets.
On all datasets, total time increases with batch size. The log y-axis reveals sub-linear growth, indicating good scalability. For instance, on NY, a 100x batch increase (10 to 1000) led to a ~35x time increase (8ms to 280ms); on CTR, a similar increase led to a ~41x time increase (150ms to 6200ms). This sub-linear trend is due to distance caching (Sec.~\ref{sec:shared_computation}), which reuses SSSP results, amortizing costs. Thus, our method is efficient for large batch requests.

\begin{figure}[htbp]
    \centering
    \begin{minipage}[t]{0.48\columnwidth}
        \centering
        \includegraphics[width=\linewidth]{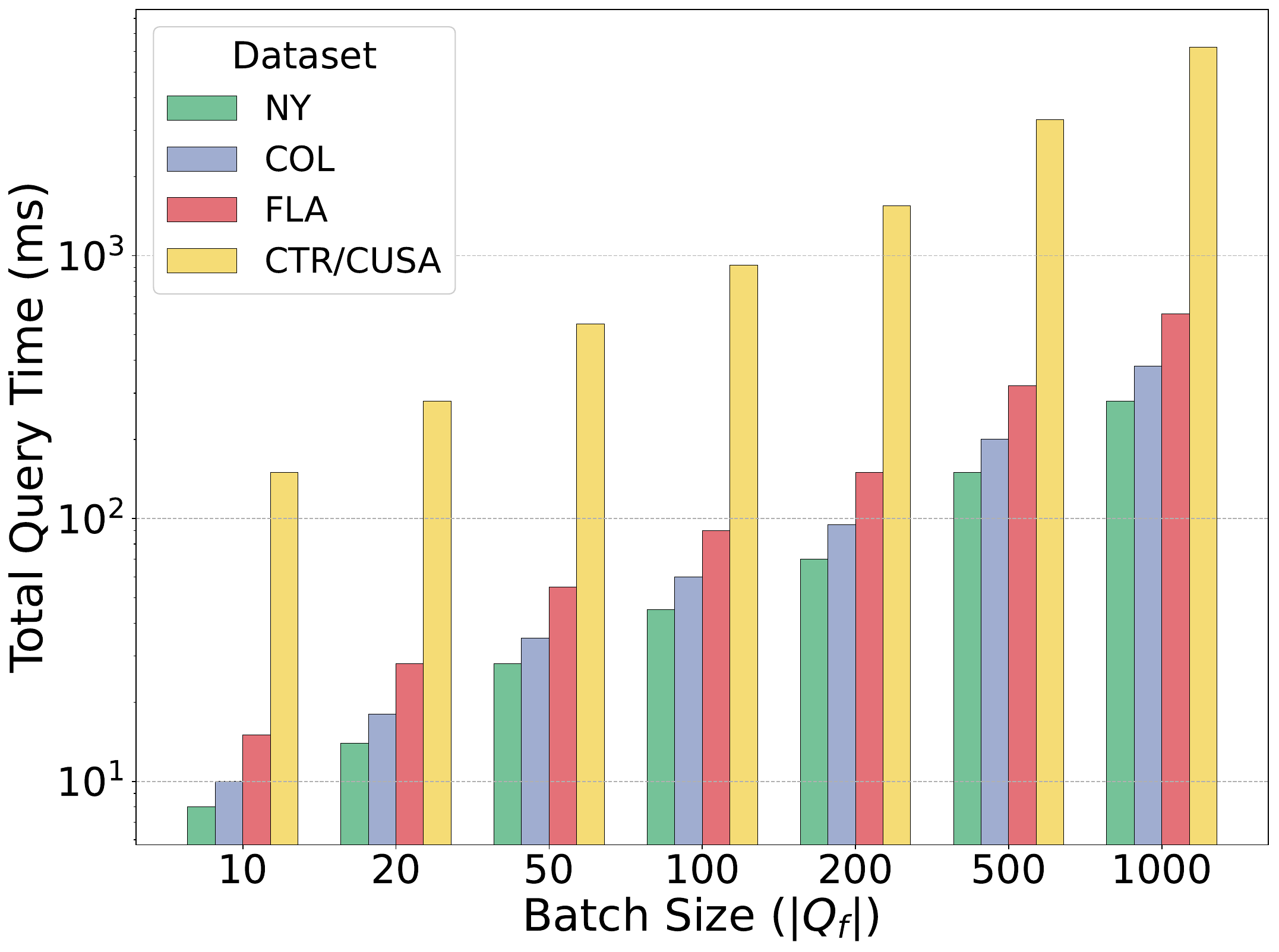}
        \caption{Time vs. batch size.}
        \label{fig:impact_batch_size_total_time}
    \end{minipage}
    \hfill
    \begin{minipage}[t]{0.48\columnwidth}
        \centering
        \includegraphics[width=\linewidth]{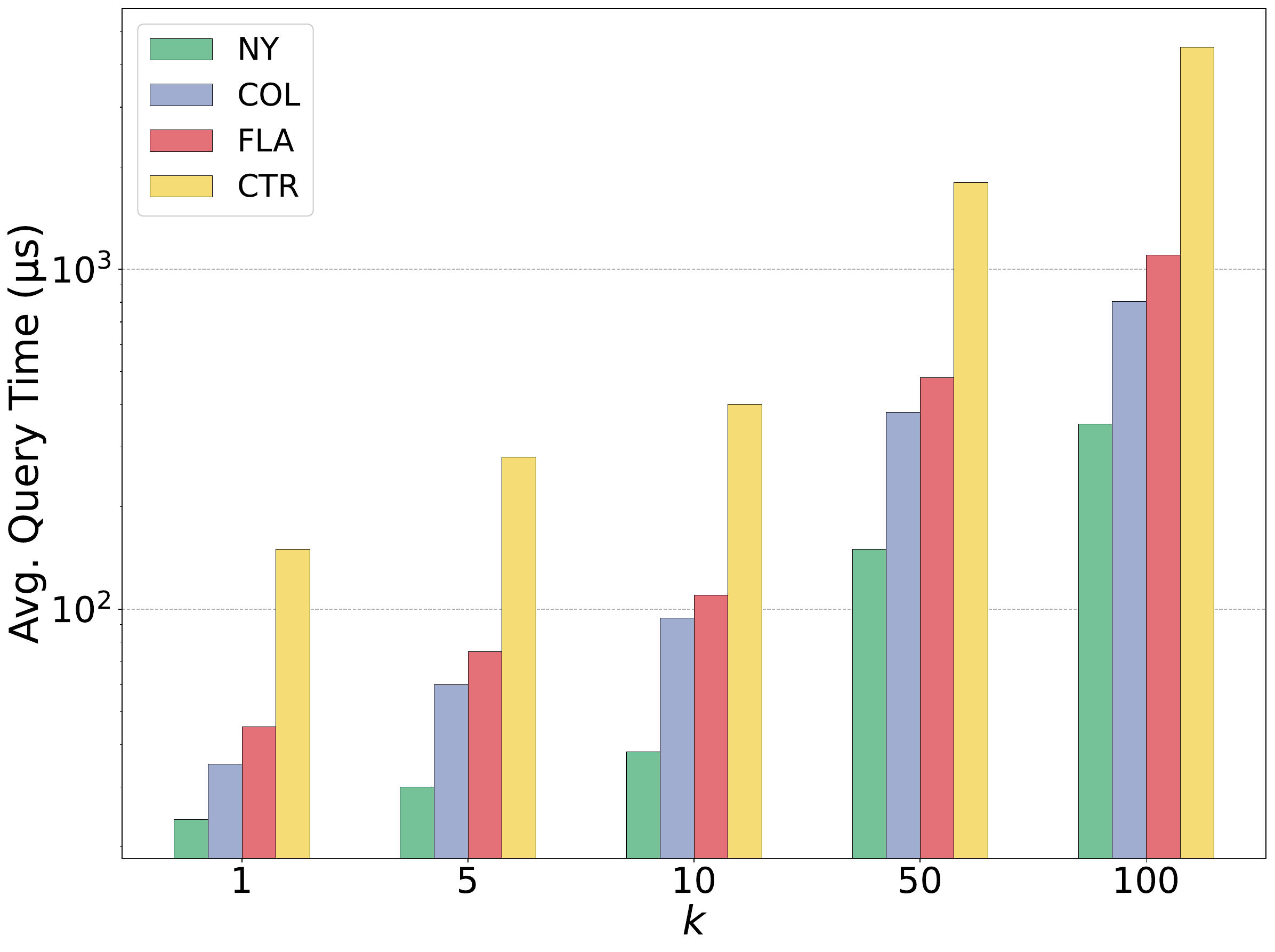}
        \caption{Time vs. $k$.}
        \label{fig:impact_k_value}
    \end{minipage}
\end{figure}

\subsubsection{Impact of $k$ Value}

Figure~\ref{fig:impact_k_value} shows how varying $k$ (1 to 100) affects performance, plotting total query time.
Query time rises with $k$ on all datasets. For small $k$ (1-10), performance degrades mildly. For larger $k$ (50-100), time increases sharply, especially on FLA and CTR. For $k=100$, time increased severalfold over $k=10$. This reflects R$k$NN's inherent complexity as $k$ grows (e.g., larger verification ranges, lower pruning efficiency).

\subsubsection{Impact of Moving Object Count}
\label{sec:exp_impact_object_count}

We examine BR$k$NN-Light's performance as the number of moving objects ($|\mathcal{M}|$) varies from $10^4$ to $10^6$.  Figure~\ref{fig:time_sssp_vs_num_objects} plots total query time and estimated SSSP operations against $|\mathcal{M}|$.

Figure~\ref{fig:time_sssp_vs_num_objects} shows that total query time (bars) increases with $|\mathcal{M}|$. When $|\mathcal{M}|$ grew 100-fold (from $10^4$ to $10^6$), query time rose from 0.025s to 0.180s, a 7.2-fold increase. This sub-linear growth relative to object count indicates good scalability.
The estimated number of SSSP operations (line) also grew sub-linearly, from 500 to 5,500 over the same range of $|\mathcal{M}|$. This controlled increase in SSSP operations, despite a large rise in object numbers, is largely due to effective pruning and the SSSP caching mechanism, which maintained a high utility (e.g., ~45\% hit rate at $|\mathcal{M}|=10^5$).
Our method thus scales well with an increasing density of moving objects. \edited{This sub-linear growth was consistently observed across all four datasets, confirming the robustness of our approach.}

\begin{figure}[htbp]
    \centering
    \begin{minipage}[t]{0.48\columnwidth}
        \centering
        \includegraphics[width=\linewidth]{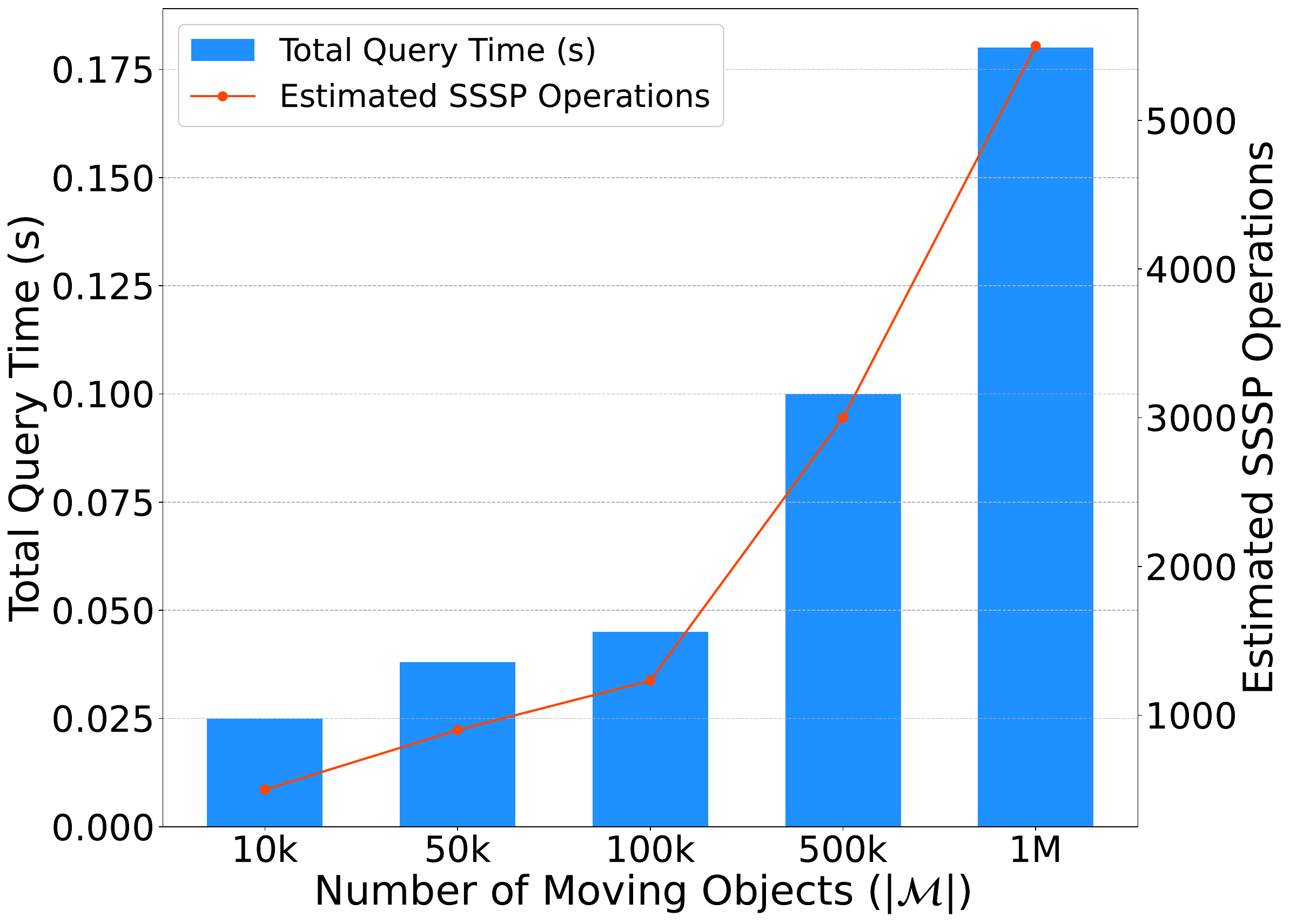}
        \caption{Time \& SSSPs vs. $|\mathcal{M}|$}
        \label{fig:time_sssp_vs_num_objects}
    \end{minipage}
    \hfill
    \begin{minipage}[t]{0.48\columnwidth}
        \centering
        \includegraphics[width=\linewidth]{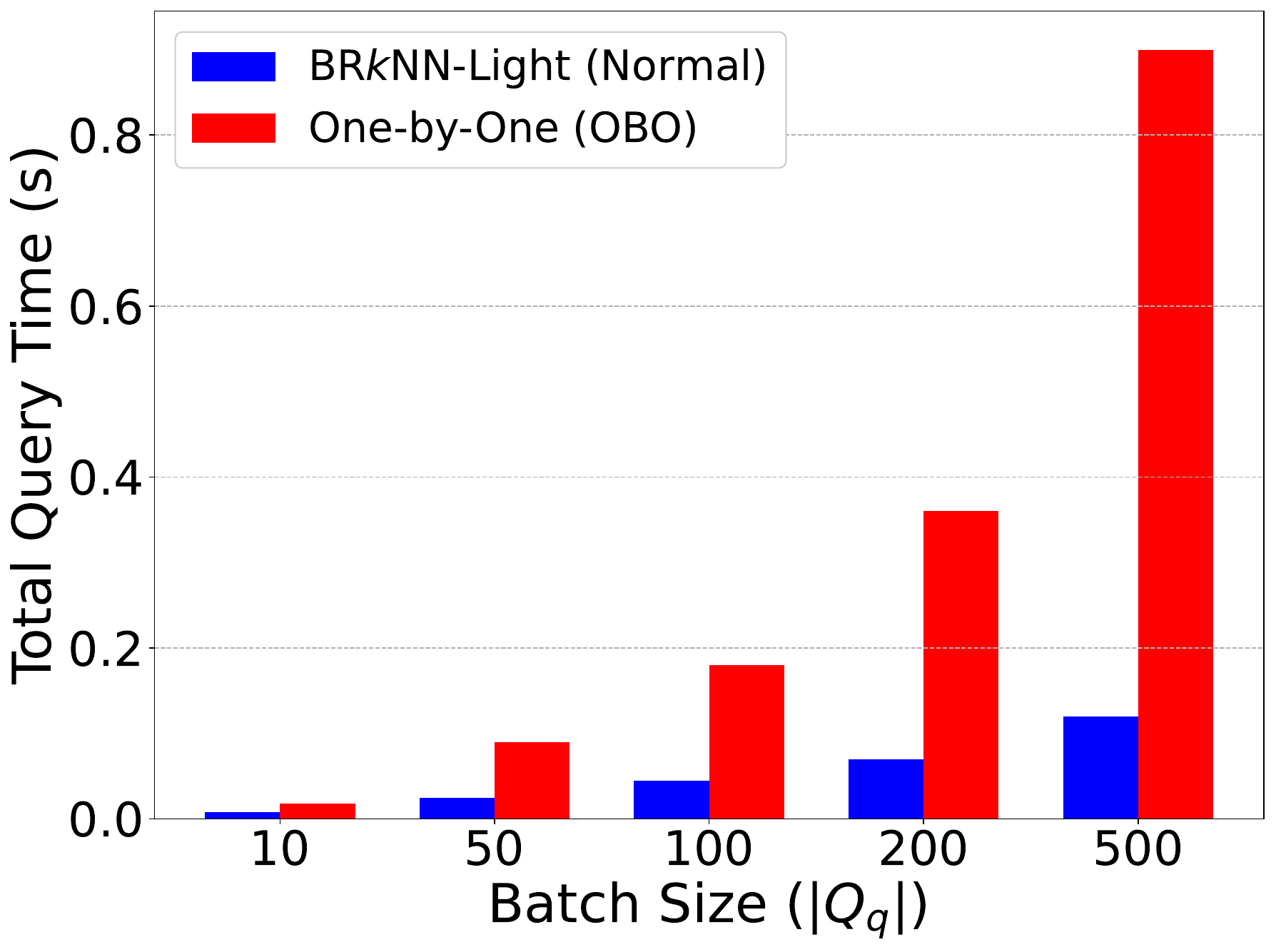}
        \caption{Time: OBO vs. Batch Processing.}
        \label{fig:obo_vs_brknn_comparison_bar}
    \end{minipage}
\end{figure}

\subsubsection{Effectiveness of Batch Processing and Computation Sharing}
\label{sec:exp_batch_effectiveness}

To highlight batching advantages, we compare our full method (BR$k$NN-Light, ``Full'') against a One-by-One (OBO) strategy. OBO processes queries independently, disabling the SSSP cache $\mathcal{D}$ (Sec.~\ref{sec:shared_computation}), but uses identical single-query verifications (QV, MBC R-tree; Sec.~\ref{sec:exp_qv_rtree_optimization}) for a fair comparison focused on sharing benefits. 
Our ``Full'' method significantly outperforms OBO (Figure~\ref{fig:obo_vs_brknn_comparison_bar}), underscoring the inefficiency of processing without computation sharing.

The core of this efficiency lies in SSSP distance caching. Table~\ref{tab:ablation_cache} (NY, batch 100; ``Full'' is BR$k$NN-Light, ``NoCache'' is OBO) shows our ``Full'' method cut query time by 75\% (0.045s vs. OBO's 0.180s) and SSSP computations by 80\% (1,233 vs. 6,171). This is supported by a 45.1\% cache hit rate, meaning nearly half of SSSP results were reused, avoiding redundant traversals. Such reductions in time and SSSPs demonstrate computation sharing's power, with BR$k$NN-Light amortizing costs effectively, especially for larger, overlapping batches.

\begin{table}[htbp]
  \centering
  \caption{Impact of distance caching on NY} 
  \label{tab:ablation_cache}
  \begin{tabular}{@{}lrrr@{}}
    \toprule
    Method & Time (s) & SSSP Ops & Hit (\%) \\ 
    \midrule
    Full & 0.045 & 1,233 & 45.1 \\
    NoCache & 0.180 & 6,171 & N/A \\
    \bottomrule
  \end{tabular}
\end{table}

\subsubsection{Effectiveness of Quick Verification and R-tree Optimization}
\label{sec:exp_qv_rtree_optimization}

Besides batching benefits, our method optimizes single-query verification. We evaluate Quick Verification (QV, Sec.~\ref{sec:quick_knn_membership_check}) and MBC R-tree range counting (Sec.~\ref{sec:our_new_query_method}).
We compare our full method (``Full'') with two variants: (1) "w/o QV": no Quick Verification . (2) ``MBR'': uses MBR R-tree instead of MBC. Experiments used NY dataset (default parameters). Results are in Table~\ref{tab:ablation_quickverify_mbc}.

\begin{table}[htbp]
  \centering
  \caption{QV and R-tree optimization} 
  \label{tab:ablation_quickverify_mbc}
  \begin{tabular}{@{}lrrr@{}}
    \toprule
    Variant & Time (s) & SSSP Ops & R-Nodes \\ 
    \midrule
    Full    & 0.045 & 1,192 & 80 \\
    w/o QV  & 0.120 & 4,511 & N/A \\
    MBR     & 0.049 & 1,534 & 82 \\
    \bottomrule
  \end{tabular}
\end{table}

Table~\ref{tab:ablation_quickverify_mbc} shows QV is critical. Disabling it (w/o QV) increased time from 0.045s (Full) to 0.120s, and SSSP Ops from 1,192 to 4,511. QV avoided ~3,319 SSSP Ops by filtering with Euclidean checks.
Comparing Full (MBC) with MBR-based, MBRs increased time from 0.045s to 0.049s (~8.9\% slower). R-Nodes accessed were similar (80 vs. 82), but SSSP Ops rose from 1,192 (MBC) to 1,534 (MBR), up ~28.7\%. MBR's looser bounds likely caused more false positives, needing more SSSP Ops. MBC's tighter fit reduced SSSP Ops, improving time by ~8.2\% over MBR.
In summary, QV is key for reducing SSSP Ops, and MBC R-tree enhances efficiency with precise pruning.

In summary, the Quick Verification acts as a crucial early filter significantly reducing the number of costly SSSP operations, while the MBC-based R-tree further refines this by enabling more precise geometric pruning than traditional MBR approaches, leading to tangible gains in overall verification efficiency.

\subsection{Comparison with Baselines}

We benchmarked BR$k$NN-Light against four baselines adapted for batch processing: ReHub\cite{10.1145/2990192}, RNNH\cite{Allheeib2022}, and InfZone/SafeZone\cite{LI2024120464}. These methods were executed sequentially for each query in a batch, with any one-time preprocessing costs (like ReHub's) excluded from the online query time measurement.

\begin{editedenv}
    As shown in Figure~\ref{fig:impact_batch_size_time}, our proposed BR$k$NN-Light demonstrates highly competitive performance. It significantly outperforms the baselines on three of the four datasets (COL, FLA, and CTR), showcasing superior scalability, especially on larger and more topologically complex road networks. For instance, on the largest dataset, CTR (Figure~\ref{fig:impact_batch_size_time}(d)), BR$k$NN-Light is approximately 4.17 times faster than its closest competitor, ReHub, for a batch of 500 queries. This highlights the effectiveness of our approach in handling large-scale scenarios.

    A notable exception is the NY dataset, where ReHub's precomputation-based Hub Labeling excels due to the network's highly regular, grid-like structure. This uniformity provides fewer opportunities for our online pruning techniques, giving a narrow edge to precomputation.

    Despite this special case, the overall strong performance of BR$k$NN-Light stems from its core design. The SSSP distance cache effectively amortizes computation costs across batch queries, leading to sub-linear time growth. This, combined with our lightweight verification and pruning, makes BR$k$NN-Light a robust and highly effective solution for batch R$k$NN queries on general-purpose road networks.
\end{editedenv}

\begin{figure}[htbp]
    \centering
    \subfloat[NY]{
        \includegraphics[width=0.47\columnwidth]{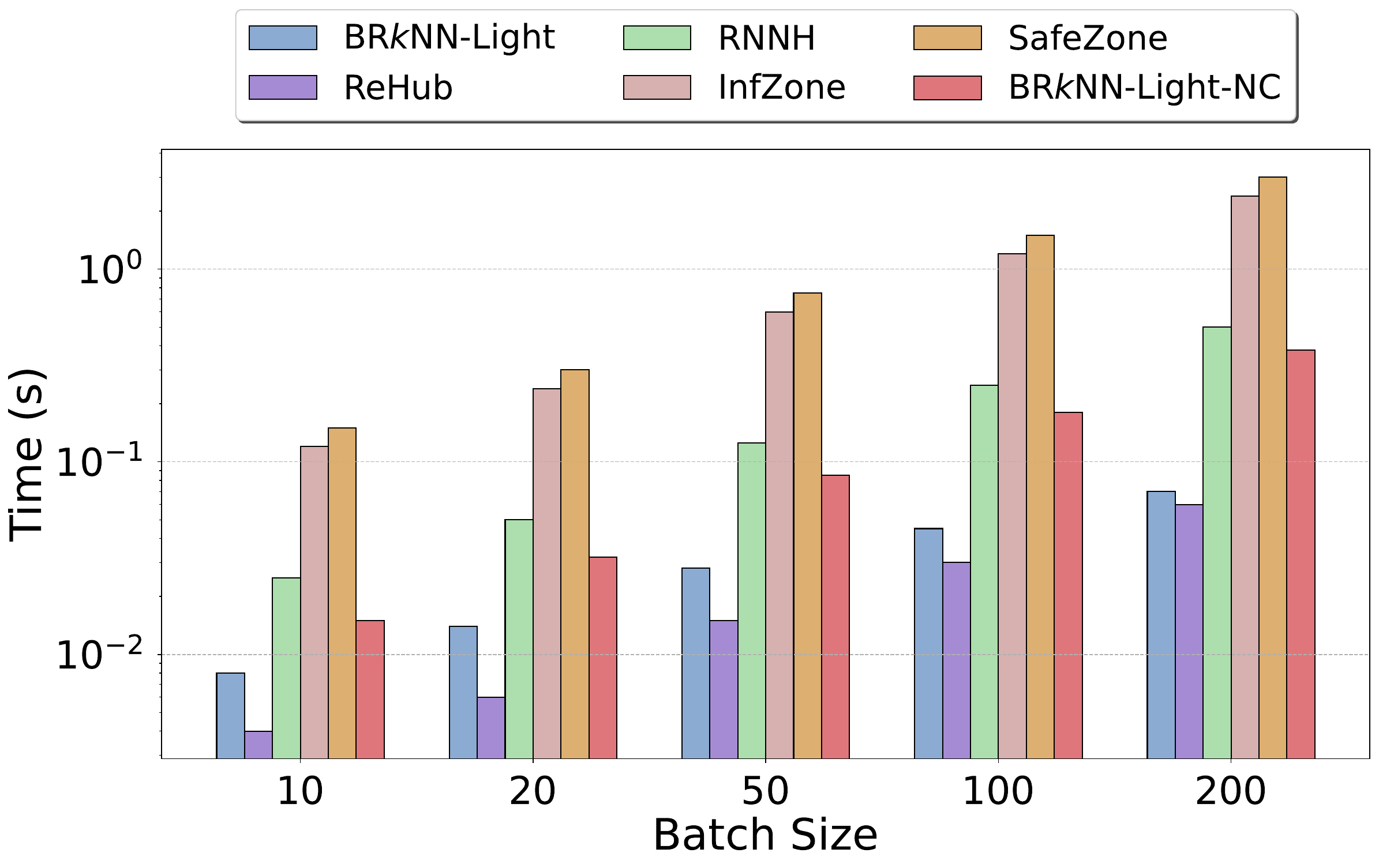}
        \label{fig:ny_time_vs_batchsize} 
    }
    \hfill
    \subfloat[COL]{
        \includegraphics[width=0.47\columnwidth]{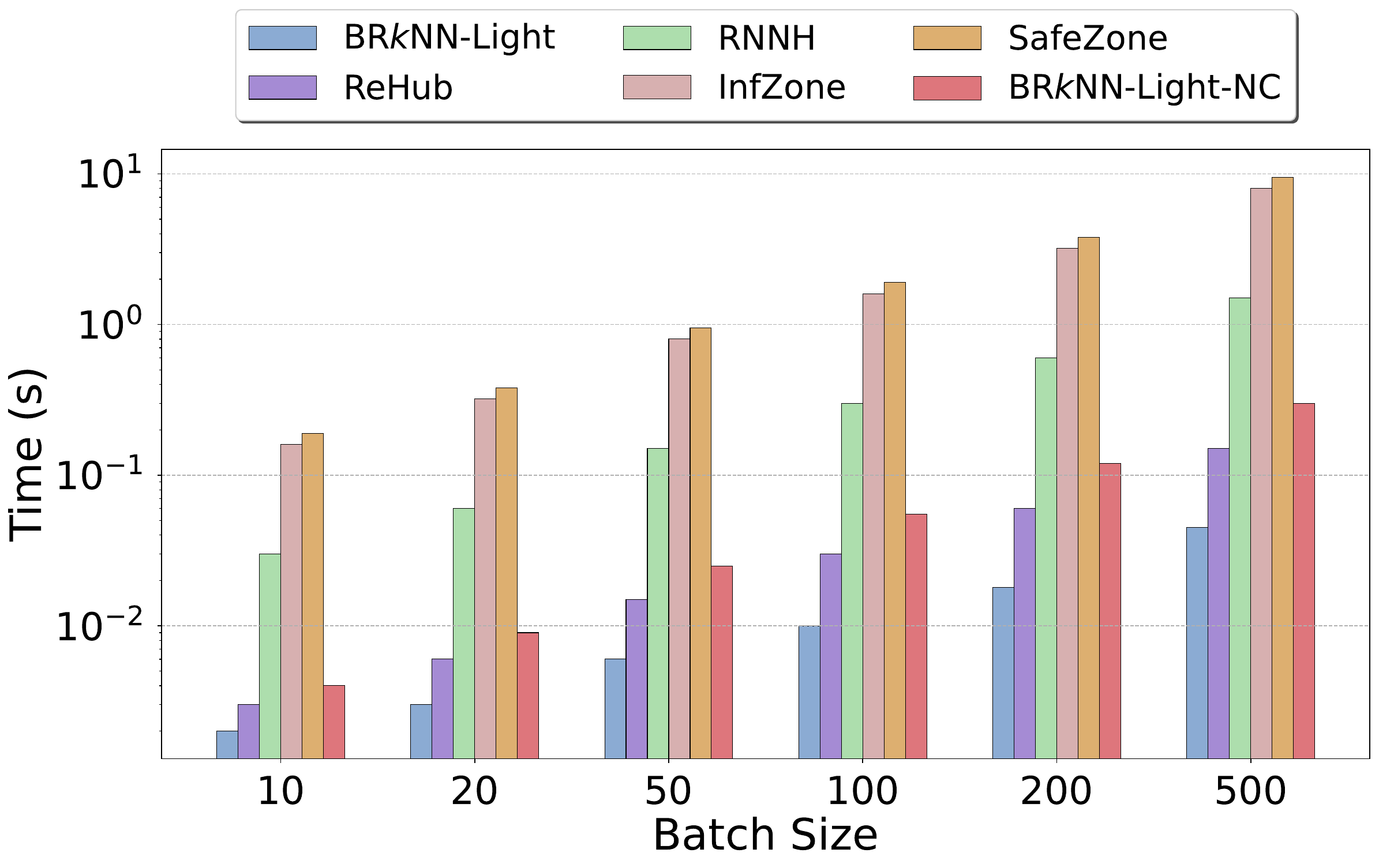}
        \label{fig:col_time_vs_batchsize}
    }
    \vspace{1ex} 
    \subfloat[FLA]{
        \includegraphics[width=0.47\columnwidth]{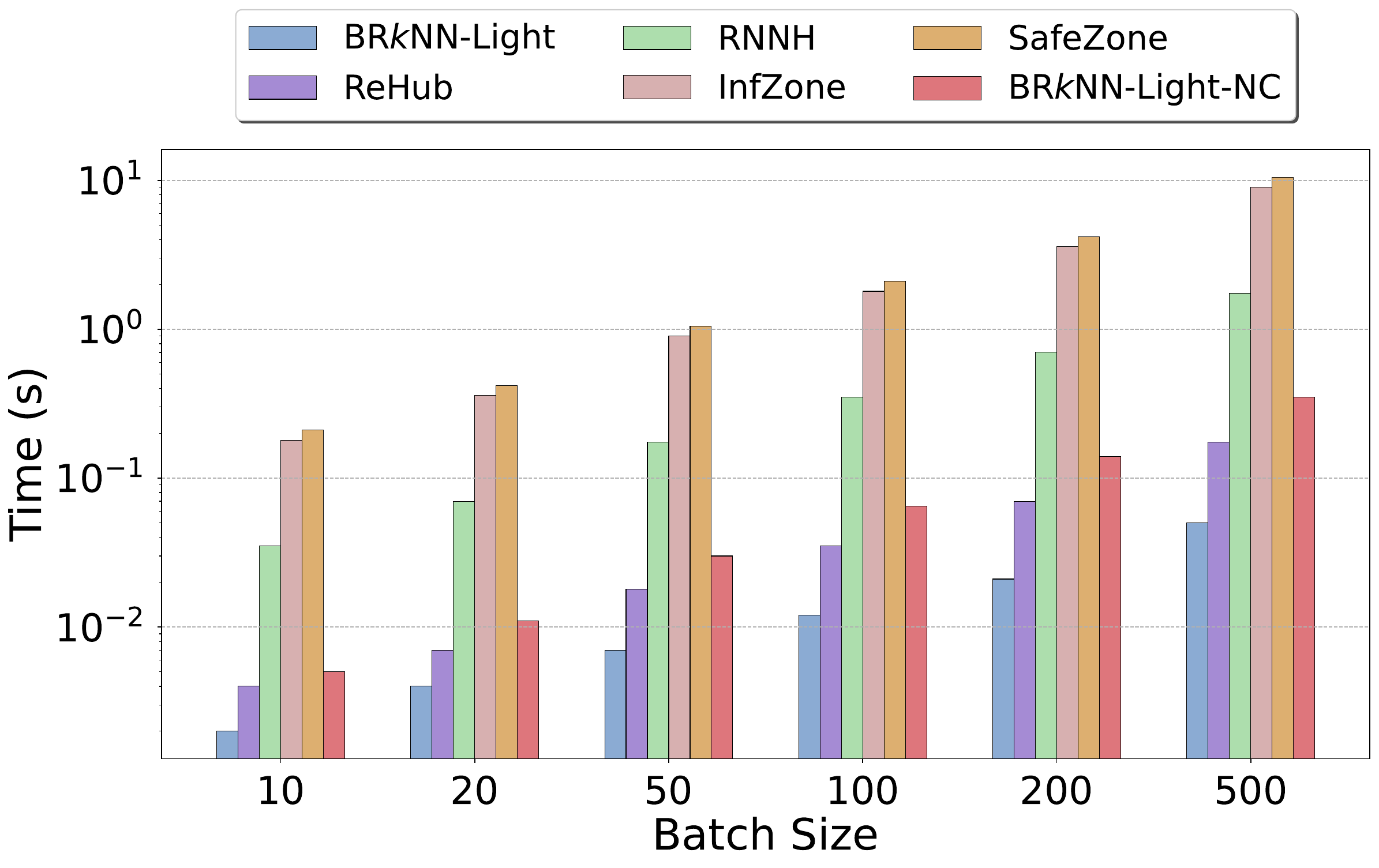}
        \label{fig:fla_time_vs_batchsize}
    }
    \hfill
    \subfloat[CTR]{
        \includegraphics[width=0.47\columnwidth]{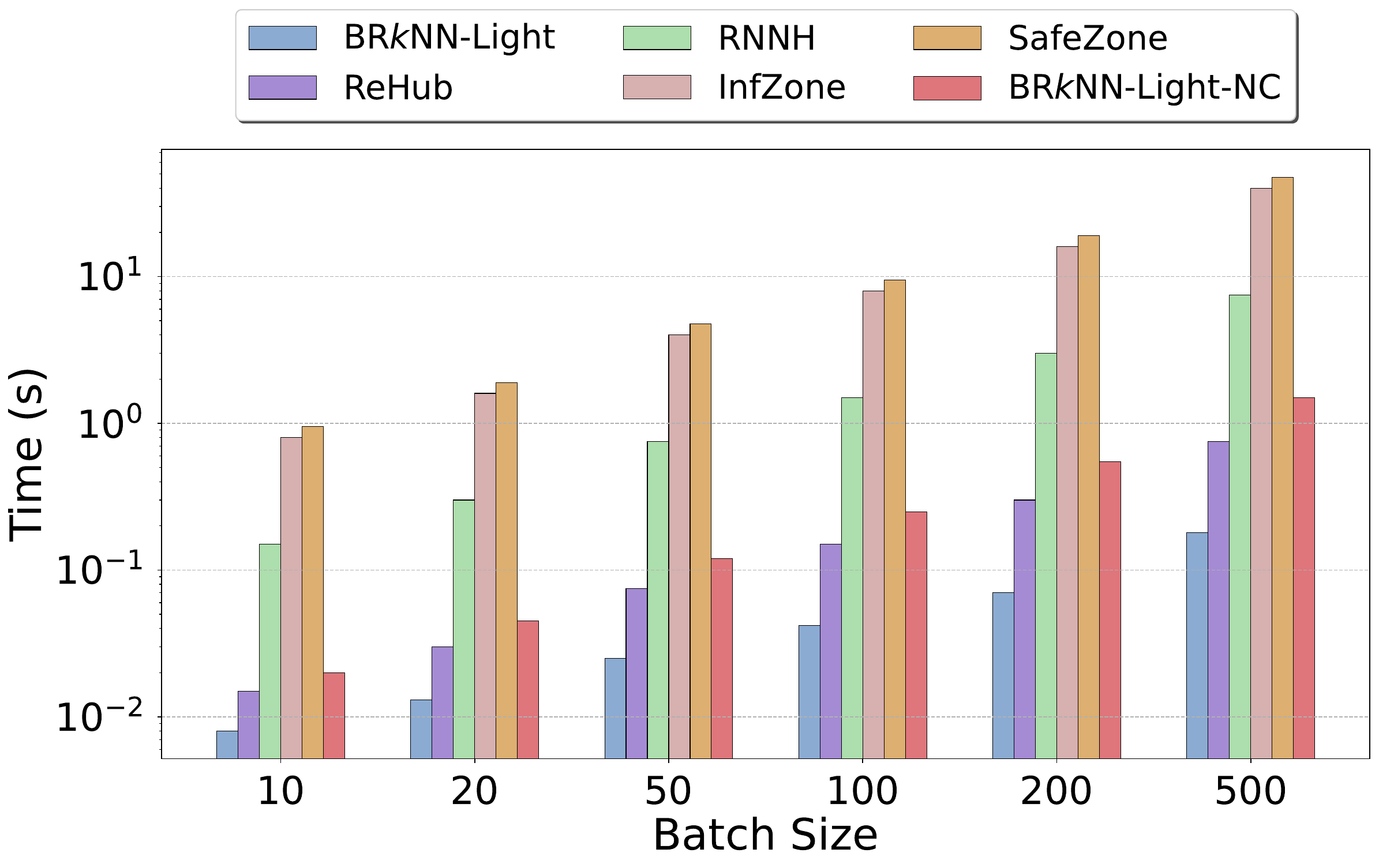}
        \label{fig:ctr_time_vs_batchsize}
    }
    \caption{Total query time vs. batch size (comparison with baselines).} 
    \label{fig:impact_batch_size_time}
\end{figure}

\subsection{Experimental Summary}
\label{sec:exp_summary}
In summary, our extensive experiments demonstrate BR$k$NN-Light's significant advantages. The algorithm scales effectively with increasing batch sizes and moving object counts, primarily due to its SSSP distance caching that enables substantial computation sharing and amortizes costs, leading to sub-linear query time growth. Ablation studies confirmed the critical impact of this caching mechanism and the Quick Verification with MBC R-tree optimization. Consequently, BR$k$NN-Light consistently and substantially outperforms adapted baseline methods across all tested real-world road networks and parameters, proving its suitability for efficiently processing batch R$k$NN queries.

\section{Conclusion}
\label{sec:conclusion}

\edited{This paper tackles the previously unexplored problem of efficiently processing batch R$k$NN queries on road networks. We propose BR$k$NN-Light, a lightweight online framework that strategically shares computation across queries using a dynamic distance cache, avoiding the high maintenance costs of precomputed indexes. Experiments confirm its strong performance on large, complex networks, though we note its limitations on simple grid-like roadmaps where precomputation-based methods are superior. Future work will target adaptive batching and dynamic environments. Our approach provides a practical and efficient solution for modern, large-scale location-based services.}

\section{Acknowledgments}
This work was supported by NSFC Grants [No.62172351].

\section{AI-Generated Content Acknowledgement}
We utilized the Gemini 2.5 Pro model (by Google) to assist in preparing this manuscript. Its application included language polishing, translation, grammar checks, and help with writing and modifying the Python code for our experimental figures. All AI-generated content was carefully reviewed and edited by the authors. The core ideas, analysis, and conclusions are entirely the work of the authors, who bear full responsibility for the paper's accuracy and integrity.

\bibliographystyle{ACM-Reference-Format}
\bibliography{ref}

\end{document}